\newcommand{\bigO}{\mathcal{O}}
\newcommand{\dollar}{\$}
\newcommand{\hashtag}{\#}
\newcommand{\minlength}{h}
\newcommand{\href}[2]{#2}
\begin{document}

\title{String Partition for Building Long Burrows-Wheeler Transforms}
\author{Enno Adler \and
Stefan Böttcher \and
Rita Hartel}
\authorrunning{Adler et al.}
\titlerunning{String Partition for Building Long BWTs}
\institute{Paderborn University,\\ 
\email{\{enno.adler, stefan.boettcher, rita.hartel\}@uni-paderborn.de}
}
\maketitle

\begin{abstract}
    Constructing the Burrows-Wheeler transform (BWT) for long strings poses significant challenges regarding construction time and memory usage. We use a prefix of the suffix array to partition a long string into shorter substrings, thereby enabling the use of multi-string BWT construction algorithms to process these partitions fast. We provide an implementation, partDNA, for DNA sequences. Through comparison with state-of-the-art BWT construction algorithms, we show that partDNA with IBB offers a novel trade-off for construction time and memory usage for BWT construction on real genome datasets. Beyond this, the proposed partitioning strategy is applicable to strings of any alphabet.
\end{abstract}

\begin{keywords}
    burrows-wheeler transform, self-indexes, string partition
\end{keywords}

\section{Introduction}

The Burrows-Wheeler transform (BWT)~\cite{burrows1994} is a widely used reversible string transformation with applications in text compression~\cite{ferragina2005}, indexing~\cite{ferragina2005}, and short-read alignment~\cite{langmead2012}. The BWT reduces the number of equal-symbol runs for data compressed with run-length encoding and allows pattern search in time proportional to the pattern length~\cite{ferragina2005}. Because of these advantages, and the property that the BWT of a string $S$ can be constructed and reverted in $\bigO(|S|)$ time and space, the BWT is important in computational biology.

In computational biology, there are at least two major BWT construction cases: The BWT can be generated either for a single long genome, for example, the reference genome~\cite{langmead2012}, or for a collection of comparatively short reads~\cite{bauer2013}. Like BCR~\cite{bauer2013} and ropebwt2~\cite{Li2014}, the standard way~\cite{cenzato2022} to define and compute a text index for a collection $W$ of strings $W_i$ is to concatenate the $W_i$ with different end-marker symbols $\hashtag_i$ between the $W_i$: $W'=W_0\hashtag_0W_1\hashtag_1\dots{}W_{k}\hashtag_k$.\footnote{The $\hashtag_i$ symbol at the end of the strings $W_i$ and the $\dollar$ symbol in $S$ are only different to explain the concept and break ties; implementations like ropebwt2~\cite{Li2014} and our approach, partDNA, use the same symbol for each end-marker.} The multi-string BWT~\cite{egidi2019} we use is also called BCR BWT~\cite{navarro2023}, or mdolBWT~\cite{cenzato2022}. 

In this paper, we show how to partition a string $S$ into a collection $W$ of short strings $W_{i}$ and order the $W_i$ in such a way that the BWT of $W$ is similar to the BWT of $S$. We say `similar' here because the BWT of $W_0, \dots, W_k$ contains $k+1$ $\hashtag$ symbols that we need to remove from the BWT after construction. For example, compare the BWT of $S = CAAAACAAACCGTAAAACAAACCGGAACAA\dollar$ to the BWT of the collection $W = \{W_0, \dots, W_8\}$ of words 

\begin{align*}
    &W_0 = A, 
    W_1 = A, 
    W_2 = AAACCGGAAC,
    W_3 = AAACCGT, \\
    &W_4 = \$C,
    W_5 = A,
    W_6 = A, 
    W_7 = AAAC, 
    W_8 = AAAC. 
\end{align*}
Because $W_4 W_6 W_8 W_3 W_5 W_7 W_2 W_1 W_0 \cdot \$ = \$ \cdot S$, $W$ is a partition of the first right rotation\footnote{Using the substring notation of Section~\ref{section:preliminaries}, the first right rotation of $S$ is $\dollar{} \cdot S[0,|S|-1]$.} of $S$. The BWT of $W$ is constructed using $W_0, \dots, W_k$ in the order of their indices:
\begin{align*}
BWT(W) &= AACTCAACC\hashtag\hashtag\hashtag\hashtag\hashtag\hashtag\hashtag\hashtag\hashtag{}GAAAAAAAAAA\$AAAACCGCCG \\
BWT(S) &=
AACTCAACC\phantom{\hashtag\hashtag\hashtag\hashtag\hashtag\hashtag\hashtag\hashtag\hashtag{}}GAAAAAAAAAA\$AAAACCGCCG
\end{align*}
If we remove the run of $\hashtag$ symbols, the BWTs of $S$ and $W$ are identical. In Figure~\ref{figure:concept}, we show the connection between single-string and multi-string BWT construction algorithms and the contribution of our paper. 

\begin{figure}
    \centering
    \includegraphics{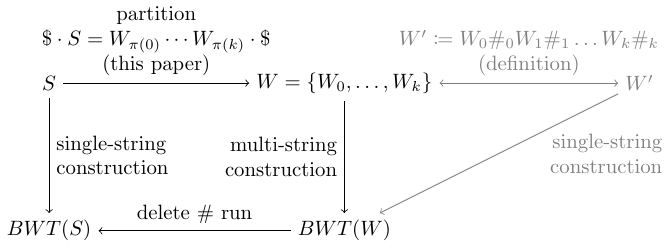}
    \caption{Summary of the relationship of single-string and multi-string BWT construction algorithms and the contribution of this paper. The $BWT$ for $S$ can also be obtained by partitioning $S$, using a multi-string construction algorithm on the sorted partition, and removing the $\hashtag$-run at the end. The output of the multi-string BWT construction algorithm is equal to the BWT for $W'$. }\label{figure:concept}
\end{figure}

Herein, our main contributions are as follows:

\begin{itemize}
    \item The proof of the correctness of computing the $BWT(S)$ as shown in Figure~\ref{figure:concept}. 
    \item An implementation, partDNA, to partition long DNA sequences for BWT construction.\footnote{The implementation is available at \texttt{https://github.com/adlerenno/partDNA}.} 
    \item A comparison of state-of-the-art BWT construction algorithms regarding the construction time and memory usage of using the partition.
\end{itemize}
\section[Related Work]{Related Work}

The BWT~\cite{burrows1994} is fundamental to many applications in bioinformatics such as short-read alignment. Bauer et al.~\cite{bauer2013} designed BCR for a collection of short DNA reads. BCR inserts all reads in parallel starting at the end of each sequence at the same time. The position to insert the next symbol of each sequence is calculated from the current position and the constructed part of the BWT.\@ The BWT is partitioned into buckets where one bucket contains all BWT symbols of suffixes starting with the same character. The buckets are saved on disk. Ropebwt, ropebwt2 by Li~\cite{Li2014}, and IBB by Adler~\cite{adler2025} are similar to BCR, but use $B+$ trees instead of linear saved buckets. 

The BWT can be obtained from the suffix array by taking the characters at the position before the suffix. Thereby, suffix array construction algorithms (SACAs) like divsufsort~\cite{fischer2017}, SA-IS by Nong et al.~\cite{nong2011}, libsais, gSACA-K by Louza et al.~\cite{Louza2016}, or gsufsort by Louza et al.~\cite{Louza2020} can be used to obtain the BWT.\@ Many SACAs rely on induced suffix sorting. Induced suffix sorting derives the order of the previous positions of a sorted set of suffix positions if these point to equal characters. An example is shown in Figure~\ref{figure:buckets}.

The grlBWT method by D{\'{\i}}az{-}Dom{\'{\i}}nguez et al.~\cite{navarro2023} uses induced suffix sorting, but additionally uses run-length encoding and grammar compression to store intermediate results and to speed up computations required for BWT construction.

eGap by Egidi et al.~\cite{egidi2019} divides the input collection into small subcollections and uses gSACA-K~\cite{Louza2016} to compute the BWT for the subcollections. Thereafter, eGap merges the subcollection BWTs into one BWT.\@

In a prefix-free set, no two words from the set are prefixes of each other; thus, their order is based on a different character rather than on word length. Consequently, the order of two suffixes starting with words from the prefix-free set is determined by these words. BigBWT by Boucher et al.~\cite{boucher2019} builds the BWT using prefix-free sets. BigBWT replaces the input with a dictionary D and a parse P using the rolling Karp-Rabin hash. P is the list of entries in D according to the input string; D forms a prefix-free set.
r-pfbwt by Oliva et al.~\cite{oliva2023} recursively uses prefix-free parsing on the parse P to further reduce the space needed to represent the input string.

SA-IS~\cite{nong2011}, libsais, gSACA-K~\cite{Louza2016}, BigBWT~\cite{boucher2019}, and r-pfbwt~\cite{oliva2023} partition the input at LMS-Positions or by using a dictionary. The difference of our partition to all these approaches is that we do not create a BWT as a result. Instead, we translate the problem into a multi-string BWT construction problem and compute the BWT using such a construction method.

\section{Preliminaries} \label{section:preliminaries}

In Appendix~\ref{appendix:abbreviations}, there is a list of symbols used in this paper.

We define a string $S$ of length $|S| = n$ over $\Sigma$ by $S = a_0 a_1 \cdots a_{n-1}$ with $a_i \in \Sigma$ for $i < n$ and always append $a_{n} = \$$ to $S$. We write $S[i] = a_i$, $S[i, j] = a_i a_{i+1} \cdots a_j$ for a substring of $S$, and $S[i..] = S[i, n]$ for the suffix starting at position $i$. For simplicity, we assume that $S[-1] = S[n] = \$,$ and also allow $S[-1, j] = a_{n} a_0 \cdots a_j$ as a valid interval.

The suffix array $SA(S)$~\cite{manber90} of a string $S$ is a permutation of $\{0, \ldots, n\}$ such that the $i$-th smallest suffix of $S$ is $S[SA(S)[i]..]$. The suffix array $SA(W)$ and document array $DA(W)$ for a collection $W$ of strings $W_i$ are arrays of numbers such that the $j$-th smallest suffix of all $W_i$ is $W_{DA(W)[j]}[SA(W)[j]..]$.



The Burrows–Wheeler transform $BWT(S)$ of a string $S$~\cite{burrows1994} can either be obtained by \(BWT(S)[i]=S[SA(S)[i]-1]\) or by taking the last column of the sorted rotations of $S$.

\section{Partition Theorem}

A single string $S$ has only unique suffixes because the suffixes differ in their length. However, if we partition $S$ into $W = \{W_0, \dots, W_k\}$, $W_0, \dots, W_k$ can have several equal suffixes. For example, $W_2 = AAACCGGAAC$ and $W_7 = AAAC$ both have the suffix $AAC$. Given only the suffixes, we cannot decide how to order the characters $G$ and $A$ in $BWT(W)$ that occur before the suffixes $AAC$ in $W_2$ and $W_7$. We break the tie by using the word order of $W_2$ and $W_7$. For that purpose, we choose the word order of the $W_i$ to be the order of the suffixes occurring in $S$ after these words $W_i$. In other words, the index $i$ of word $W_i$ is the number of strictly smaller suffixes within the set of all suffixes of $S$ that start behind words from the collection $W$. In Figure~\ref{figure:psa}, we visualized this concept of obtaining the word indices.

We transfer this argumentation from suffixes to suffix arrays: The index $i$ of word $W_i$ is the index $i$ within a filtered suffix array that contains only those positions in $S$ that are after the words of $W$. Computing the full suffix array and thereafter filtering it for the positions that follow the word ends yields the correct order of the words. However, if we would construct the full suffix array to partition $S$, this would be inefficient because we can obtain the BWT from the suffix array directly. But this trivial solution shows that we can obtain the word indices in $\bigO(n)$.

In multi-string construction of $BWT(W)$, the last characters $c_0, \dots, c_k$ of each word $W_0, \dots, W_k$ are inserted at the first positions of the BWT of $W$; thus, for our approach, $c_0, \dots, c_k$ must be the first $k+1$ symbols of $BWT(S)$. Because $BWT(S)[i]=S[SA(S)[i]-1]$, the positions in $S$ following the words $W_0, \dots, W_k$ form a continuous subarray at the lowest positions of the suffix array of $S$: A prefix of the suffix array. 

We define $PSA(S) = SA(S)[0\dots k]$ as a prefix of length $k+1$ of the suffix array of $S$. In the following, we assume $k < n$ to be a fixed value. We write $t \in PSA(S)$, if there exists $i$, $0 \leq i \leq k$, such that $t = PSA(S)[i]$.

\begin{figure*}
    \centering
    \includegraphics[width=\linewidth]{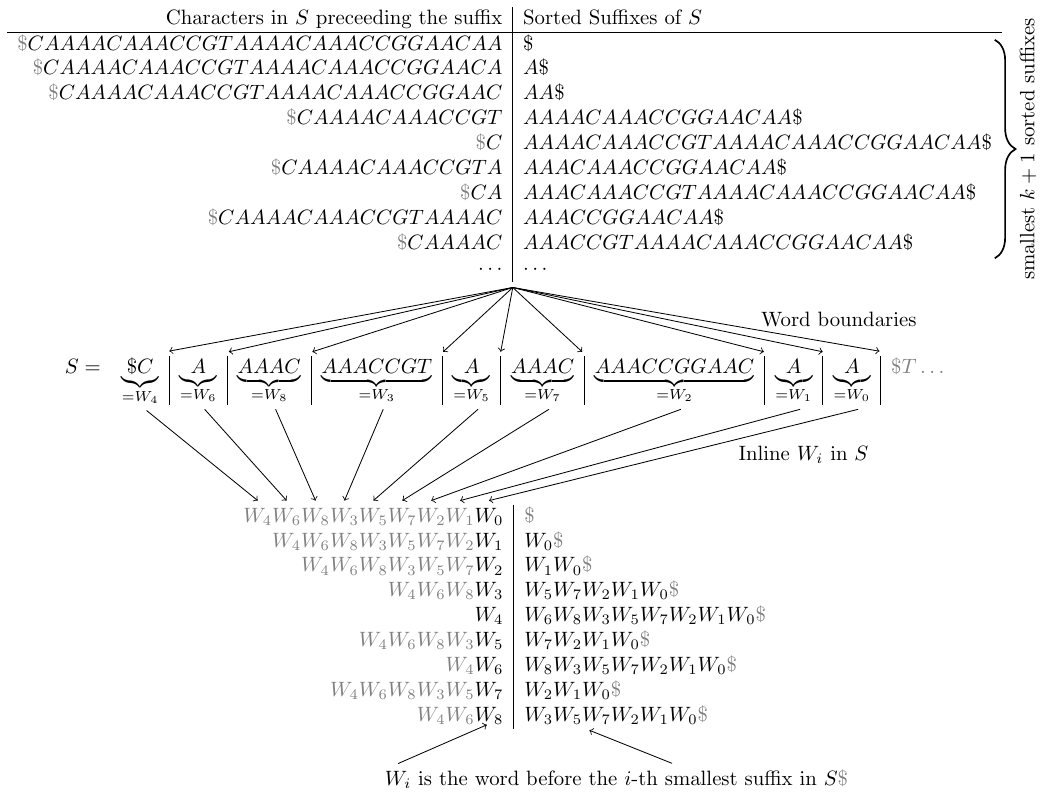} 
    \caption{Concept of partitioning $S$ using $PSA(S)$ with $k = 8$ to obtain the collection $W$. The suffix of a word $W_i$ in $S$ could either be expressed by characters of $S$ or by words from $W$ in the order of their appearance in $S$}\label{figure:psa}
\end{figure*}

For each suffix entry $PSA(S)[j]$, let $\Omega(j) = \max(\{-1\} \cup \{ t \in PSA(S) : t \leq PSA(S)[j]-1\})$ be the next smaller suffix array entry in $PSA(S)$ or $-1$, if $PSA(S)[j]$ is already the smallest value in $PSA(S)$.
We define $W$ as the collection of words $W_i = S[\Omega(i), PSA(S)[i]-1]$ for $0 \leq i \leq k$, which is a partition of the first right rotation of $S$ into substrings. 

As $W$ is a partition of the first right rotation of $S$, each character of $S$ is mapped to exactly one character in one of the words of $W$. Therefore, we define two functions $word$ and $position$ to map a position $q$ from $S$ to the word $W_{word(q)}$ in $W$ and position $position(q)$ of that character $S[q]$ in $W_{word(q)}$: For each $q$ with $-1 \leq q < |S|$ exists exactly one $j \leq k$ such that $q \in [\Omega(j), PSA(S)[j]-1]$, because each position in the partition belongs to exactly one interval. We set $word(q) = j$ and $position(q) = q - \Omega(j)$. We also set $word(n) = word(-1)$ and $position(n) = position(-1)$.

In the initial example, the $C$ at position $0$ in $S$ corresponds to the $C$ in word $W_4$ at position $1$, so $word(0) = 4$ and $position(0) = 1$.

In the edge cases for $k$ in the $PSA(S)$ definition, we obtain for $k = 0$ that the prefix of the suffix array only contains the position of the $\dollar$ symbol, which is at the end of the word. It follows $\Omega(n) = -1$, because $\{ i \in PSA(S) : i < PSA(S)[i]\} = \emptyset$. Thereby, $W$ contains only one word which is $W_0 = S[\Omega(i), PSA(S)[i]-1] = S[-1, n-1]$, which is the first right rotation of $S$. Using the $\$$ as an end marker, the BWTs of $S$ and $W$ are equal, because the suffixes have not changed.

For the edge case $k = n$, $PSA(S) = SA(S)$ and $\{ i \in PSA(S) : i < PSA(S)[i]-1\} = \{0, \dots, PSA(S)[i] - 1\}$ because the suffix array contains all positions of $S$. Thereby, we get $\Omega(i) = PSA(S)[i] - 1$ for all positions $i$, so 
\begin{align*}
   W_i &= S[\Omega(i), PSA(S)[i]-1] \\
   &= S[PSA(S)[i]-1, PSA(S)[i]-1] \\
   &= S[PSA(S)[i]-1] \\
   &= S[SA(S)[i]-1] \\
   &= BWT(S)[i].
\end{align*}
Each word is one symbol and their order is already the BWT of $S$ because the order is obtained from the suffix array of $S$. If we construct the BWT from the words in this edge case, it is easy to see that $BWT(W) = BWT(S) + \hashtag^{n}$.

\begin{theorem}\label{theorem:partition}
    Let $l (= k + 1)$ be the size of $W$ and $m (= n + l)$ be the total length of $BWT(W)$. 
    Then, for all $i < m$:

    \[ BWT(W)[i] = \begin{cases}
        BWT(S)[i] & 0 \leq i < l \\
        \hashtag & l \leq i < 2l \\
        BWT(S)[i - l] & 2l \leq i < m
    \end{cases} \] 

    Thus, if we know $BWT(W)$, we get by removing the $\hashtag$-run \[BWT(S) = BWT(W)[0,l-1] + BWT(W)[2l,m-1].\]
\end{theorem}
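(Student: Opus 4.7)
My plan is to partition the $n+l$ suffixes of $W$ into three groups and verify that they fall into the three ranges of $BWT(W)$ claimed by the theorem. I label them (a), the $l$ singleton endmarker suffixes $\hashtag_i$; (b1), the $l$ full-word suffixes $W_i\hashtag_i$; and (b2), the $n-l$ proper partial-word suffixes $W_i[p..]\hashtag_i$ with $1 \le p < |W_i|$. Because $\hashtag$ is smaller than every other symbol and $\hashtag_0 < \cdots < \hashtag_{l-1}$, the (a) suffixes occupy positions $0,\ldots,l-1$ of $BWT(W)$, and the character preceding $\hashtag_i$ in $W_i\hashtag_i$ is the last character of $W_i$, which equals $S[PSA(S)[i]-1]=BWT(S)[i]$ since $SA(S)[i]=PSA(S)[i]$ for $i<l$. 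This handles the first case.

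For the remaining cases I set up the correspondence $\phi(W_i[p..]\hashtag_i)=\Omega(i)+p$, sending each non-(a) suffix of $W$ to an $S$-position in $\{-1,0,\ldots,n-2\}$ (with $-1$ interpreted as a virtual slot smaller than every true $S$-suffix). By the paper's construction the (b1) suffixes land on $\{-1\}\cup(PSA(S)\setminus\{\max PSA(S)\})$, while the (b2) suffixes land bijectively on the non-$PSA(S)$ positions, which are exactly $\{SA(S)[l],\ldots,SA(S)[n-1]\}$. The crux is a lex-order-preservation lemma: for any two non-(a) suffixes $\sigma$ and $\tau$, the relation $\sigma<_W\tau$ agrees with the lex comparison of the $S$-suffixes at $\phi(\sigma)$ and $\phi(\tau)$. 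The proof proceeds by case analysis on the length $k$ of the longest common prefix of the two compared $W$-fragments: when both fragments have length greater than $k$ the two comparisons use the same real characters of $S$; when both have length exactly $k$, they reduce to comparing $\hashtag_i$ versus $\hashtag_j$ on the $W$-side and $S[PSA(S)[i]..]$ versus $S[PSA(S)[j]..]$ on the $S$-side, both orderings being determined by the $PSA(S)$-indices.

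The main obstacle is the asymmetric sub-case where only one fragment has length $k$, so one side meets its $\hashtag$ at position $k$ while the other still has a regular character. The $W$-comparison then favours the truncated side, whereas the $S$-comparison must look past position $k$. The key observation is that the $S$-continuation on the truncated side begins at a $PSA(S)$ position, i.e., at the head of one of the $l$ smallest $S$-suffixes, while on the long side it begins strictly inside a word and hence at a non-$PSA(S)$ position of $SA$-rank at least $l$; therefore the $S$-comparison also favours the truncated side. Applied to a pair with one suffix in (b1) and the other in (b2), the same observation rules out the one configuration that could flip the order, namely ``$W_j[p..]$ is a strict prefix of $W_i$'': it would force the (b1) $S$-suffix (rank $<l$) to be lex-greater than the (b2) $S$-suffix (rank $\ge l$), contradicting the rank inequality. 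Hence every (b1) is $<_W$ every (b2), so the (b1) suffixes fill positions $l,\ldots,2l-1$, each with preceding character $\hashtag$ (read cyclically), and the (b2) suffixes fill positions $2l,\ldots,m-1$ in the order $SA(S)[l],\ldots,SA(S)[n-1]$. Reading off the preceding character $W_i[p-1]=S[\Omega(i)+p-1]=S[SA(S)[l+r]-1]$ of the $r$-th (b2) suffix gives $BWT(W)[2l+r]=BWT(S)[l+r]$, finishing the proof.
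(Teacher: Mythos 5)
Your proof is correct and follows essentially the same route as the paper's: your order-preservation lemma for $\phi$ is Proposition~\ref{lemma:ordering} stated for the inverse of the $(word,position)$ map, your three cases on the fragment LCP correspond one-to-one to the paper's three cases on the tie-breaking offset $c$ (character mismatch, one side hitting a $\hashtag$, both sides hitting $PSA(S)$ positions), and the final read-off of the preceding characters reproduces Theorem~\ref{theorem:suffix_array} and the main proof. The only presentational differences are the direction of the bijection and your explicit (a)/(b1)/(b2) grouping with the ``impossible prefix'' observation, which the paper handles implicitly by deducing $j,\dots,j+c-1\not\in PSA(S)$.
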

The full proof is in Appendix~\ref{appendix:proof}. Proof sketch: Using the functions $position$ and $word$, we prove the statement: if two suffixes $S[i..] < S[j..]$, then the order of the suffixes is $W_{word(i)}[position(i)..] < W_{word(j)}[position(j)..]$. This can be proven by using $c$, the smallest value for which either $S[i+c] \neq S[j+c]$, or $i+c \in PSA(S)$. Using this statement, we express $SA(W)$ and $DA(W)$ with $SA(S)$ and the $position$ and $word$ functions. Then, we use \(BWT(W)[i]=W_{DA(W)[i]}[SA(W)[i]-1]\) for the most cases to retrieve the $BWT(W)$ from $SA(W)$ and $DA(W)$.

\section{Partition DNA Sequences: partDNA}\label{section:dna_prefix_suffix_array}

Next, we partition a DNA sequence because genomes are long strings over $ACGT$. To partition a DNA sequence $S$, we first find the words having the smallest suffixes in $S$, and second, we order the words according to their suffixes in $S$.



We do not allow every $k$ as the length of the prefix $PSA(S)$ of the suffix array and $k$ is not given explicitly. Instead, the exact value of $k$ will be part of the result of the partition. In particular, we use the chosen length $\minlength$ for the minimal length of an $A$ run as a parameter to partition $S$.

We partition $S = CAAAACAAACCGTAAAACAAACCGGAACAA\$$ with $\minlength=3$ within our following continuous example. The example is visualized in Figure~\ref{figure:example}. Before we go into the details, we give a short high-level description: In Step 1, we scan $S$ to find an initial collection of words. This collection is smaller than the final collection $W$, because we can avoid sorting the larger collection $W$ by using induced sorting. This will be Step 6. To sort the subset of suffixes behind the words in the collection, we sort the initial collection of words in Steps 2 and 3 and if at least two words are equal (Step 4), we will break their tie by their suffixes. As these suffixes start with words again, see Figure~\ref{figure:psa} for an example, and we have already sorted the words in Step 2 and 3, we can use this to assign a unique name to every different word (Step 4) and compute a suffix array (Step 5) that breaks all remaining ties. With the induced sorting in Step 6 and the final polish in Step 7 we get $W$.

\begin{figure*}
    \centering
    \includegraphics[width=\linewidth]{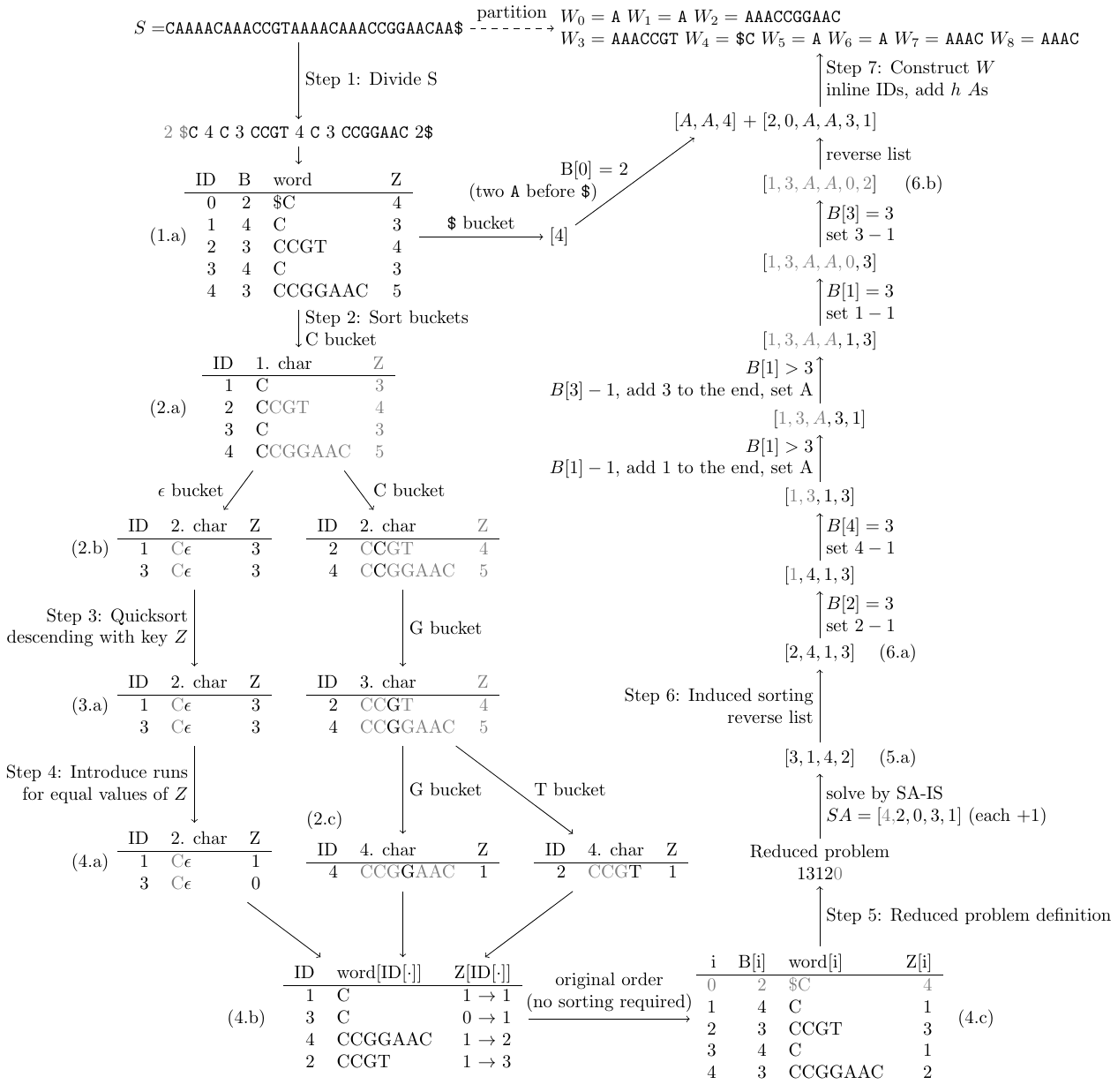} 
    \caption{Example calculation of partitioning using $\minlength = 3$ on the word $S$. In Steps 2 and 3, we only reorder the array ID, the reordering of columns of the other elements is only shown for illustration. Steps 2 and 3 are done in place, so there is no action needed to go from (4.a) and (2.c) to (4.b).}\label{figure:example}
\end{figure*}

In Step 1, we divide $S$ either before each run of at least $\minlength$ $A$ symbols or before $A^*\dollar$ and assign IDs to the words in order of their appearance. Note that these word IDs are different from the word incides that are assigned to order their appearance in $W$. Additionally, we avoid writing runs having $\minlength$ or more $A$s, we only save their run-length. This results in fewer words than the final partition in $W$ has because we obtain the other words by induced suffix sorting in Step 6. We keep the number of $A$s before a word in an array $B$ and the number of $A$s after the word in an array $Z$. When we reach the $\dollar$ symbol, we keep the length of the $A$-run preceeding $\dollar$ in $B[0]$, so in the example of Figure~\ref{figure:example}, $B[0] = 2$. We store the maximum length of an $A$ run plus 1 in the last element of $Z$ regardless of how long the current $A$ run is; so in the example, we store $Z[4] = 5$. We use these values of $Z$ later to sort the words descending because we designed the values in $Z$ in such a way that a higher value implies that a smaller suffix follows. Because $S$ does not start with at least $\minlength$ $A$s, $S$ is not divided at position $0$, which is also the position after the $\dollar$ regarding rotations of $S$. Therefore, we place the $\dollar$ at the beginning of the word with the ID $0$.

In Step 2, we sort the IDs with a value of $1$ and greater (Figure~\ref{figure:example} (1.a)), which form our initial bucket. We sort the IDs by recursively refining buckets: At recursion depth $i$, we group the IDs of one bucket into subbuckets by the $i$-th character of their words. Therefore, the call structure is a tree (Figure~\ref{figure:example} (2.a) to (2.c)), that has a maximum number of 5 branches on the next level, one for each character $A$, $C$, $G$, and $T$ and one branch named $\epsilon$. In particular, to order the IDs into the subbuckets, we first count the number of symbols to get the size of the subbuckets, then create $5$ auxiliary arrays, one for each bucket, copy in a single scan the IDs into the auxiliary array and finally copy all auxiliary arrays back in order. The arrays $B$ and $Z$ and the words are only accessed indirectly and changed by using $B[ID[i]]$ and $Z[ID[i]]$, so we only sort the IDs in Step 2.

If a word $w$ has no further characters, we reached an $\epsilon$-leaf of the sort tree for the word $w$. Then, the suffix of $S$ starting with $w$ is smaller than the suffixes of $S$ of the words that have a character at that position. The reason is that the pattern $A^{y + \minlength}$ or the lexicographically even smaller suffix $A^y\$$ with $y \geq 0$ occurs after $w$ in $S$, because we used these patters to divide $S$ into the words. Because we used different patterns to divide $S$, we sort the words according to the order of these patterns with quicksort as Step 3. In Figure~\ref{figure:example} (1.a), the patterns $A^3$,  $A^4$, and $A^2\$$ are encoded by the values $3$, $4$, and $5$ in the array $Z$. 
Therefore, the quicksort sorts the IDs in descending order by using the values in $Z$ as keys. 

After the quicksort in the leaves of the sort tree, we might have identical words as in Figure~\ref{figure:example} (3.a). In the leaf, two words are equal if they appear consecutively and their value in $Z$ is equal. In Step 4, we encode equal words as a run, so the first of the equal words gets a $1$ and each other word gets a $0$ in $Z$. This allows an easier assignment of names in the following step. All words in leaves containing only one word get a $1$ in $Z$, like in Figure~\ref{figure:example} (2.c).

To solve the case that we found at least two equal words in Step 4, we define a reduced suffix array construction problem to sort them. In Figure~\ref{figure:example} (4.a), the words with IDs 1 and 3 are equal. The reduced problem is defined as in SA-IS~\cite{nong2011} and similar SACAs. In a single pass over $Z[ID[i]]$, we give integer names to the words, like in Figure~\ref{figure:example} (4.b): We add the current value of $Z$, which is $0$ or $1$, to the last name and then assign the sum to $Z$. The smallest assigned name is $1$, because we use $0$ as a global end marker for a valid suffix array problem definition. We obtain the reduced problem by reading the names from $Z[i]$ omitting the first word with ID $0$.

In Figure~\ref{figure:example} after Step 5, the reduced problem is $R = 13120$, with the global end marker $0$. We obtain $SA(R) = [\textcolor{gray}{4,} 2, 0, 3, 1]$. We omit the first value $4$ because the $4$ points towards the end marker $0$. We need to add $1$ to the $SA(R)$ values because we skipped the word with ID $0$ in the recursive problem. We obtain $[3, 1, 4, 2]$ (Figure~\ref{figure:example} (6.a)).

In Step 6, we obtain all words from the sorted IDs by induced suffix sorting.\@ Figure~\ref{figure:buckets} shows the possible induced suffix sorting steps. Let $L$ be the list of IDs in inverted order, so $L = [2, 4, 1, 3]$. We iterate from the start until we reach the end of $L$. For the ID $j$, if $B[j]$ is greater than $\minlength$: We reduce $B[j]$ by 1, write an $A$ to the current position of the list, and append the current ID $j$ to the end of the list again. If $B[j]$ is $\minlength$: We reduce $j$ by $1$ at the current position because we sorted suffixes. We invert the list again, so $L = [2, 0, A, A, 3, 1]$. 

\begin{figure*}
    \centering
    \includegraphics*[scale=1]{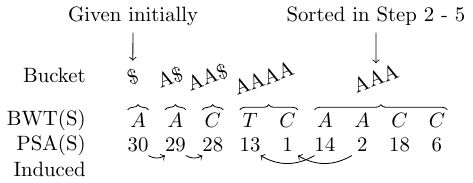}
    \caption{The buckets and the steps of induced suffix sorting that start and end inside the displayed interval. The name of a bucket is a shared prefix of the suffixes. For the $A^v$ buckets it is required that the next symbol is not an $A$, otherwise, the $AAA$ and $AAAA$ bucket would also belong to the $AA$ bucket. We induce the positions from the $\$$ bucket to the next bucket on the right and reduce the suffix array entry by 1 if and only if the symbol at the position before the entry is an $A$, which is the symbol in the BWT row. In the same way, we can fill the $A^v$ buckets right to left.}\label{figure:buckets}
\end{figure*}

For the $\$$-bucket, we also use induced sorting: The starting list is $Q = [4]$ because the word with ID $4$ in Figure~\ref{figure:example} (1.a) of the divided words has $A^{B[0]}\$$ as a suffix and we do $B[0]$ steps each inserting one $A$. The result is $Q = [A, A, 4]$ and the combined result is $Q + L$. 

In Step 7, each $A$ in $Q + L$ yields an own word $W_i = A$. Each ID in $Q + L$ yields the corresponding word in Figure~\ref{figure:example} (1.a). Additionally, in front of each word with an ID that is $1$ or greater in Figure~\ref{figure:example} (1.a), we prepend $\minlength$ $A$s. After Step 7, we get the partition:

\begin{align*}
    &W_0 = A, 
    W_1 = A, 
    W_2 = AAACCGGAAC,
    W_3 = AAACCGT, \\
    &W_4 = \$C,
    W_5 = A,
    W_6 = A, 
    W_7 = AAAC, 
    W_8 = AAAC. 
\end{align*}

To summarize: By partitioning $S$ into the collection $W = \{W_0, \dots, W_8\}$, we have transformed the task of constructing the BWT of a long string $S$ into the task of constructing the BWT of a collection $W$ of smaller words. 

\section{Runtime Complexity}\label{appendix:connection_to_sais}

We claim that partDNA runs in $\bigO(n)$. First, the number of words of the scan in Step 1 is upper bounded by $\frac{n}{\minlength}+1$, because there is always at least one word, which is produced by the $\$$-symbol, and there needs to be an $A$-run of length $\minlength$ or longer between two words to divide the words. Thus, in all next steps, we will work on at most $\frac{n}{\minlength}$ words, because we process the word with ID $0$ differently.

We simplify the recursion to ease the analysis of the Steps 2, 3, and 4. First, instead of using quicksort on the $Z$-values, we extend the word with ID $i$ by a $Z[i]$-long $A$-run. Note that after $\minlength$ consecutive $A$-buckets, the sort order changes in the sense that the $A$ bucket is smaller than the $\epsilon$-bucket and that there are only those two buckets anymore, however this does not change the number of steps to perform the sort. Additionally, we stop the recursive bucket sort only if an $\epsilon$-bucket is reached, we do not stop if it is the last element in a bucket. These adjustments let us remove the quicksort at the higher cost of a deeper recursion. Using this simplification, each letter of $S$ is exactly used once to put an ID into a subbucket, because each letter of $S$ belongs to exactly one word and the $i$-th letter is used only at recursion depth $i$ in Step 2. Putting an ID into a bucket needs only $\bigO(1)$. Thereby, Step 2 is in $\bigO(n)$. Step 4 uses $\bigO(1)$ steps per word in the $\epsilon$-leaf, thus Step 4 needs $\bigO\left(\frac{n}{\minlength}\right)$.

For Step 5, the reduced problem $R$ consists of one letter per sorted word plus the end marker, thus $R$ has size $\frac{n}{\minlength} + 1$. The suffix array of $R$ needs $\bigO(\frac{n}{\minlength})$ using SA-IS. All induced sorting steps together, including those of Step 6, perform $\frac{n}{\minlength}+1$ times an insertion of a word ID and up to $n$ times an insertion of an $A$, because $|S| = n$ limits the number of $A$s in $S$. Hence, $\bigO(n) + \bigO\left(\frac{n}{\minlength} + 1\right) = \bigO(n)$.

As all steps are in $\bigO(n)$, partDNA is in $\bigO(n)$.


\section{Experimental Results}

We compare the BWT construction algorithms on the datasets listed in Table~\ref{tab:bwt_construction_algortihms} regarding construction time and RAM usage. We obtain time and maximum resident set size (max-rss) by the \texttt{/usr/time} command.\footnote{The test is available at \texttt{https://github.com/adlerenno/partDNAtest}.} Each input file is a concatenation of all bases within the reference file because we want to test a single long input string. Ambiguous bases are omitted. In Table~\ref{tab:bwt_construction_algortihms}, if a partition length $\minlength$ is provided, the dataset was partitioned using $\minlength$ from the dataset with the row $\minlength = \text{–}$. We performed all tests on a Debian 5.10.209-2 machine with 128GB RAM and 32 Cores Intel(R) Xeon(R) Platinum 8462Y+ @ 2.80GHz. 

ropebwt3, which uses a SA-IS implementation, BigBWT, r-pfbwt, divsufsort, libsais, grlBWT, and gsufsort compute the BWT of a collection by concatenating the strings. Hence, they are only tested on the single-string dataset (called original), because partitioning the input adds extra symbols in form of the end-markers.\ ropebwt, ropebwt2, IBB, and BCR are tested on the partitioned datasets. Because partDNA and the following construction algorithm are sequential, we took the sum of their runtimes, and we used the maximum of their max-rss values. In most cases, partDNA had a lower max-rss than the following BWT construction algorithm.\ eGap was tested on both, but we omit the results for eGap on the partitioned datasets because they were slower than on the single-string dataset, which is internally constructed by gSACA-K only.

\begin{table*}
    \centering  
    \caption{Used BWT construction algorithms and datasets from \href{https://www.ncbi.nlm.nih.gov/}{NCBI}.}\label{tab:bwt_construction_algortihms}
    \begin{minipage}[t]{0.51\linewidth}
    \centering
    \begin{tabular}[t]{llp{5cm}}
        approach & paper & implementation\\
        \hline
        
        \href{https://github.com/lh3/ropebwt}{ropebwt} & – & github.com/lh3/ropebwt\\

        \href{https://github.com/lh3/ropebwt2}{ropebwt2} & \cite{Li2014} & github.com/lh3/ropebwt2 \\

        \href{https://github.com/lh3/ropebwt3}{ropebwt3} & – & github.com/lh3/ropebwt3 \\

        \href{https://github.com/adlerenno/ibb}{IBB} & \cite{adler2025} & github.com/adlerenno/ibb \\

        \href{https://gitlab.com/manzai/Big-BWT}{BigBWT} & \cite{boucher2019} & gitlab.com/manzai/Big-BWT 
        \\

        \href{https://github.com/marco-oliva/r-pfbwt}{r-pfbwt} & \cite{oliva2023} & github.com/marco-oliva/r-pfbwt \\

        \href{https://github.com/y-256/libdivsufsort}{divsufsort} & \cite{fischer2017} & github.com/y-256/libdivsufsort \\

        \href{https://github.com/IlyaGrebnov/libsais}{libsais} & – & github.com/IlyaGrebnov/libsais \\

        \href{https://github.com/ddiazdom/grlBWT}{grlBWT} & \cite{navarro2023} & github.com/ddiazdom/grlBWT \\

        \href{https://github.com/felipelouza/egap}{eGap} & \cite{egidi2019} & github.com/felipelouza/egap \\

        \href{https://github.com/felipelouza/gsufsort}{gsufsort} & \cite{Louza2020} & github.com/felipelouza/gsufsort \\
        
        \href{https://github.com/giovannarosone/BCR_LCP_GSA}{BCR} &\cite{bauer2013} & github.com/giovannarosone\newline/BCR\_LCP\_GSA \\
    \end{tabular}
    \end{minipage}
    \begin{minipage}[t]{0.48\linewidth}
        \centering
    \begin{tabular}[t]{lrrr}
        dataset & $\minlength$ & average length & collection size $l$ \\ \hline

        \href{https://www.ncbi.nlm.nih.gov/datasets/genome/GCF_000001635.27/}
        {GRCm39} & – & 2,654,621,783 & 1 \\
        \href{https://www.ncbi.nlm.nih.gov/datasets/genome/GCF_000001635.27/}
        {GRCm39} & 3 & 30 & 88,252,043 \\
        \href{https://www.ncbi.nlm.nih.gov/datasets/genome/GCF_000001635.27/}
        {GRCm39} & 4 & 83 & 31,890,467 \\
        \href{https://www.ncbi.nlm.nih.gov/datasets/genome/GCF_000001635.27/}
        {GRCm39} & 5 & 215 & 12,350,256 \\
        \hline
        \href{https://www.ncbi.nlm.nih.gov/datasets/genome/GCF_000001405.40/}{GRCh38} & – & 3,049,315,783 & 1 \\
        \href{https://www.ncbi.nlm.nih.gov/datasets/genome/GCF_000001405.40/}{GRCh38} & 3 & 26 & 116,219,956 \\
        \href{https://www.ncbi.nlm.nih.gov/datasets/genome/GCF_000001405.40/}{GRCh38} & 4 & 67 & 46,529,667 \\
        \href{https://www.ncbi.nlm.nih.gov/datasets/genome/GCF_000001405.40/}{GRCh38} & 5 & 151 & 20,189,969 \\
        \hline
        \href{https://www.ncbi.nlm.nih.gov/datasets/genome/GCF_018294505.1/}{JAGHKL01} & – & 14,314,496,836 & 1 \\
        \href{https://www.ncbi.nlm.nih.gov/datasets/genome/GCF_018294505.1/}{JAGHKL01} & 3 & 40 & 356,650,569 \\
        \href{https://www.ncbi.nlm.nih.gov/datasets/genome/GCF_018294505.1/}{JAGHKL01} & 4 & 121 & 118,749,573 \\
        \href{https://www.ncbi.nlm.nih.gov/datasets/genome/GCF_018294505.1/}{JAGHKL01} & 5 & 364 & 39,284,785 \\
    \end{tabular}
    \end{minipage}
\end{table*}

\begin{figure*}
    \centering
    \begin{minipage}[t]{0.49\textwidth}
        \centering
        \includegraphics[width=\textwidth]{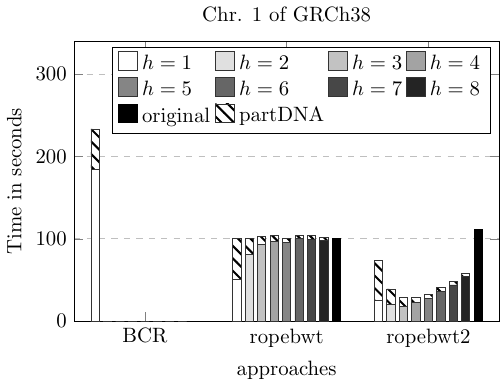}
    \end{minipage}
    \begin{minipage}[t]{0.49\textwidth}
        \centering
        \includegraphics[width=\textwidth]{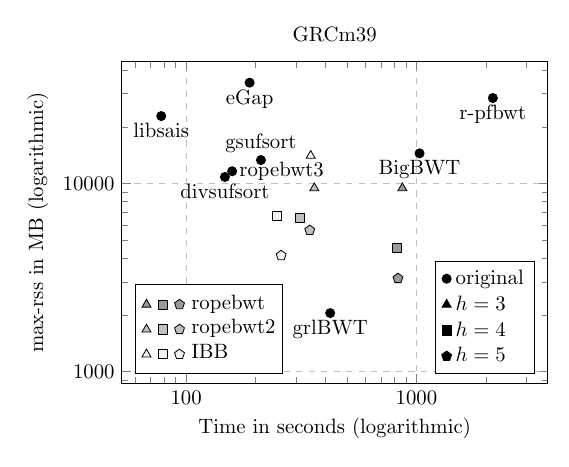}
    \end{minipage}
    \begin{minipage}[t]{0.49\textwidth}
        \centering
        \includegraphics[width=\textwidth]{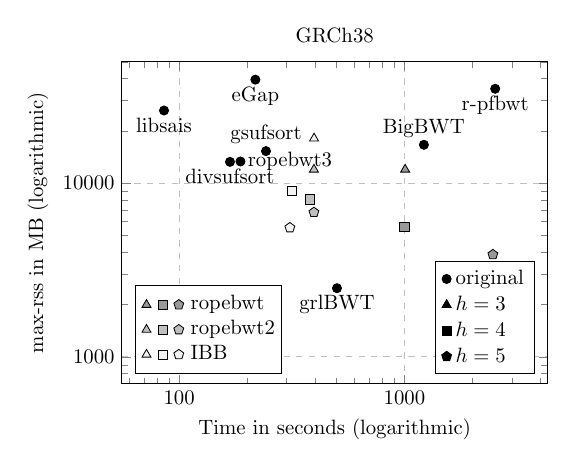}
    \end{minipage}
    \begin{minipage}[t]{0.49\textwidth}
        \centering
        \includegraphics[width=\textwidth]{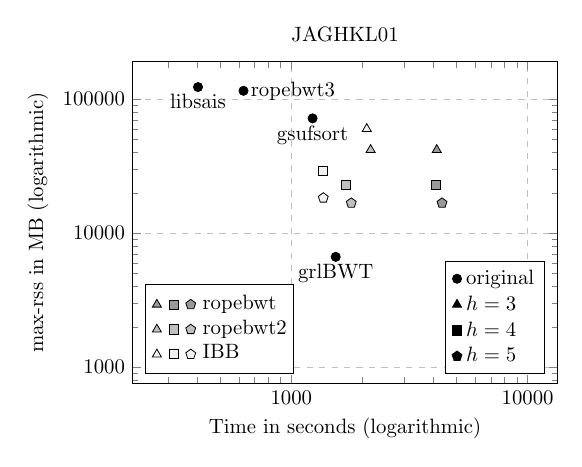}
    \end{minipage}
    \caption{BWT construction times and maximum resident set sizes (max-rss). Grey polygons in scatter plots belong to a partioned dataset: the grey tone determines the BWT construction algorithm and the number of edges the used parameter $h$, as the legends explain. Missing points mean that the construction algorithms abort or do not create an output file.}
    \label{fig:time_and_space}
\end{figure*}

In Figure~\ref{fig:time_and_space}, our test series suggest that $h \in \{3, 4, 5\}$ works best regarding construction time as shown for chromosome 1 of GRCh38. The scatter plots in Figure~\ref{fig:time_and_space} visualize the needed time and RAM usage given the file, the approach, and the length $\minlength$. An algorithm $A$ is pareto optimal if there is no other algorithm $B$ that uses both, less or equal time, which means the point of $B$ is left or equal of the point of $A$, and less or equal RAM, so the point of $B$ is equal or lower than the point of $A$. For example, eGap is not pareto optimal on the files GRCm39 and GRCh38, because divsufsort uses less time and less RAM compared to eGap, so divsufsorts point is in the left lower direction from the point of eGap.

Our test shows that libsais is the fastest construction algorithm for this class of BWT construction problems followed by divsufsort and ropebwt3.\ grlBWT uses the lowest amount of RAM.\@ Our results show, that only libsais, divsufsort, ropebwt3, grlBWT and partDNA with IBB achive pareto optimal results.\ partDNA and IBB together, especially for $h\in\{4, 5\}$, offer a new, and balanced trade-off between speed and space consumption.

\section{Conclusion}

We have presented an approach that partitions long strings of any alphabet to transform a long single-string BWT construction problem into a multi-string BWT construction problem. We have shown that our implementation partDNA designed for DNA sequences together with ropebwt2 provides a new pareto-optimum within the time-space trade-off for BWT construction. 



\newpage
\bibliographystyle{psc}
\bibliography{biblio}

\newpage
\appendix
\section{Proof of the Partition Theorem}\label{appendix:proof}

\begin{proposition}\label{lemma:first_letter}
    For any $q$ with $-1 \leq q < n$: if $q \in PSA(S)$ or $q = -1$, then \[position(q) = 0.\]
\end{proposition}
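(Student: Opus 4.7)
The plan is to show that under either hypothesis $q = -1$ or $q \in PSA(S)$, the unique interval $[\Omega(j), PSA(S)[j]-1]$ containing $q$ has its left endpoint equal to $q$, i.e. $\Omega(word(q)) = q$. Since $position(q) = q - \Omega(word(q))$ by definition, this immediately gives $position(q) = 0$. So the entire proof reduces to identifying $\Omega(j)$ with $q$ in each case.

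First I would invoke the existence/uniqueness statement from the preliminaries: there is exactly one $j \leq k$ with $q \in [\Omega(j), PSA(S)[j]-1]$. Setting $j := word(q)$, I get the two-sided bound $\Omega(j) \leq q \leq PSA(S)[j]-1$ for free, and the whole game is to upgrade $\Omega(j) \leq q$ to equality.

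For the case $q = -1$: since $\Omega(j)$ is always at least $-1$ by construction (the set in the $\max$ contains $\{-1\}$), combining with $\Omega(j) \leq q = -1$ forces $\Omega(j) = -1 = q$. For the case $q \in PSA(S)$: from $q \leq PSA(S)[j]-1$ I get $q \in \{t \in PSA(S) : t \leq PSA(S)[j]-1\}$, so $q$ is an element of the set being maximised in the definition of $\Omega(j)$; hence $\Omega(j) \geq q$. Together with $\Omega(j) \leq q$, equality follows. (A small sanity check worth recording: in this second case we necessarily have $j \neq i$ whenever $q = PSA(S)[i]$, because $PSA(S)[i] \not\leq PSA(S)[i]-1$; this is exactly the intuition that $PSA(S)[i]$ sits at the \emph{start} of the next word rather than inside $W_i$.)

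I do not expect any real obstacle: the proposition is essentially an unpacking of the definitions of $\Omega$, $word$, and $position$. The only thing to be careful about is not to confuse the index $j = word(q)$ (the interval containing $q$) with the index $i$ such that $q = PSA(S)[i]$ — these are different, and conflating them would make the argument look circular.
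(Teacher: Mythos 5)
Your proof is correct and takes essentially the same route as the paper's: both reduce the claim to showing that $\Omega(word(q)) = q$, from which $position(q) = q - \Omega(word(q)) = 0$ is immediate. The only (cosmetic) difference is that the paper exhibits $word(q)$ explicitly as the index $p$ with $PSA(S)[p] = \min\{t \in PSA(S) : t > q\}$ and computes $\Omega(p) = q$ directly, whereas you take the unique containing interval as given and squeeze $\Omega(word(q))$ between the lower bound $-1$ (resp.\ the membership bound $q$) and the upper bound $q$.
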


\begin{proof}
    $SA[0] = n$, because $S[n] = \$$ is by definition the smallest symbol and unique, thereby, $n$ is the index of the smallest suffix. Let $p$ be the $\min(\{t \in PSA(S) : t > q\})$ which is well defined, because $q < n$ and $n \in PSA(S)$ for any $k \geq 0$. Then $q = \Omega(p)$ and $q \in [\Omega(p), PSA(S)[p]-1]$, so $position(q) = q - \Omega(p) = q - q = 0$.
\end{proof}

\begin{proposition}\label{lemma:not_first_letter}
    For any $q$ with $0 \leq q < n$: if $q \not\in PSA(S)$, then \[word(q) = word(q-1) \text{ and } position(i - 1) = position(i) - 1.\]
\end{proposition}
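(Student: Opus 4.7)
The plan is to unfold the definitions of $word$ and $position$ and show directly that $q-1$ lies in the same interval $[\Omega(j), PSA(S)[j]-1]$ as $q$, from which both claims fall out. First I would set $j = word(q)$, so by definition $q \in [\Omega(j), PSA(S)[j]-1]$. The key observation is that $\Omega(j)$ is, by its definition, either $-1$ or an element of $PSA(S)$; since the hypotheses give $q \geq 0$ and $q \notin PSA(S)$, this forces $q \neq \Omega(j)$ and hence the strict inequality $\Omega(j) < q$. This is the only spot where the assumption $q \notin PSA(S)$ is actually used.

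From $\Omega(j) < q$ I get $\Omega(j) \leq q-1$, and from $q \leq PSA(S)[j]-1$ trivially $q - 1 \leq PSA(S)[j] - 1$. Therefore $q-1 \in [\Omega(j), PSA(S)[j]-1]$. Invoking the uniqueness part of the definition of $word$ (each index in $\{-1, \ldots, n-1\}$ belongs to exactly one of these intervals), this immediately yields $word(q-1) = j = word(q)$. The position identity is then a one-line computation: $position(q-1) = (q-1) - \Omega(word(q-1)) = (q - \Omega(j)) - 1 = position(q) - 1$.

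There is no real obstacle here; the only subtlety is to be explicit about the case distinction on $\Omega(j)$. The assumption $q \geq 0$ rules out $q = -1 = \Omega(j)$, and $q \notin PSA(S)$ rules out $q$ coinciding with the value $\Omega(j) \in PSA(S)$. With both cases excluded, the strict inequality $\Omega(j) < q$ is immediate, and everything that follows is arithmetic on intervals.
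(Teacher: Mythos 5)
Your proof is correct and follows essentially the same route as the paper's: both arguments reduce to showing $\Omega(j) < q$ (you do it directly from $\Omega(j) \in \{-1\} \cup PSA(S)$ while the paper phrases it as a contradiction from $position(q)=0$), and then conclude $q-1$ lies in the same interval, giving both identities by the same arithmetic.
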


\begin{proof}
    If $position(q) = 0 = q - \Omega(j)$ for a $j$, it follows $\Omega(j) = q$, so $q \in \{-1\} \cup PSA(A)$, which contradicts the assumptions. Thereby, $\Omega(j) < q < PSA(j)-1$, from this we get $\Omega(j) \leq q-1 < PSA(j)-1$, so $word(q-1) = word(q)$. $position(q-1) = q - 1 - \Omega(j) = q - \Omega(j) - 1 = position(q) - 1$.
\end{proof}

We define $W'_i = W_i + \hashtag_i$ and use the $W'_i$ in the proofs, because $BWT(W)$ contains $k$ $\hashtag$ symbols that the $W_i$ do not contain. The lexicographical order is $\hashtag_0 < \hashtag_1 < \cdots < \hashtag_k < \dollar < a$ for all $a\in\Sigma$. The advantage of using the $W'_i$ is, that their total length is equal to the length of $BWT(W)$. Thus, we can write down the suffix array for the $W'_i$ that fits to $BWT(W)$, which is not possible for the $W_i$. The advantages of using the $W_i$ in the previous part of the paper are the easier presentation of the partition of $S$ and the definition of a multi-string problem as in Figure~\ref{figure:concept}. We add the index $i$ to the $\hashtag$ symbols in order to break the tie of equal suffixes at the $\hashtag$ symbols. 

\begin{proposition}\label{lemma:ordering}
    If $S[i..] < S[j..]$, then \[W'_{word(i)}[position(i)..] < W'_{word(j)}[position(j)..].\]
\end{proposition}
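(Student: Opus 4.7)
The plan is to identify the index $c \geq 0$ at which the two $W'$-suffixes first diverge, and then show that the sign of the divergence agrees with the hypothesis $S[i..] < S[j..]$. Concretely, I would let $c$ be the smallest $c \geq 0$ such that at least one of the following holds: (a)~$S[i+c] \neq S[j+c]$; (b)~$i+c = PSA(S)[word(i)]$; (c)~$j+c = PSA(S)[word(j)]$. Such a $c$ exists because $\$$ appears uniquely in $S$, so $S[i..]$ and $S[j..]$ must eventually differ, and each word of $W$ has finite length.

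Before examining position $c$ itself, I would show that the two $W'$-suffixes agree on their first $c$ characters. For every $d$ with $0 \leq d < c$, iterating Proposition~\ref{lemma:not_first_letter} (with Proposition~\ref{lemma:first_letter} handling the base case) yields $word(i+d) = word(i)$ and $position(i+d) = position(i)+d$, and symmetrically for $j$. Unfolding the definition $W_{word(i)} = S[\Omega(word(i)), PSA(S)[word(i)]-1]$ then gives $W'_{word(i)}[position(i)+d] = S[i+d]$, and analogously on the $j$-side. Combined with $S[i+d] = S[j+d]$ (forced by the minimality of $c$), the two $W'$-suffixes coincide in positions $0$ through $c-1$.

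At position $c$ the comparison splits into four cases, according to whether (b) and (c) trigger. If neither triggers, the relevant $W'$-characters are just $S[i+c]$ and $S[j+c]$, and the hypothesis together with agreement on the earlier positions forces $S[i+c] < S[j+c]$. If (b) triggers but not (c), then $W'_{word(i)}[position(i)+c] = \hashtag_{word(i)}$ while $W'_{word(j)}[position(j)+c] \in \Sigma$, and the convention $\hashtag < a$ for all $a \in \Sigma$ gives the desired inequality immediately. If (c) triggers but not (b), the symmetric argument would yield the wrong order, so this configuration must be ruled out; for that I would observe that $PSA(S)$ is exactly the set of start positions of the $k+1$ lexicographically smallest suffixes of $S$, so $j+c \in PSA(S)$ together with $i+c \notin PSA(S)$ forces $S[j+c..] < S[i+c..]$ and hence $S[j..] < S[i..]$, contradicting the hypothesis. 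If both (b) and (c) trigger, the comparison reduces to $\hashtag_{word(i)}$ versus $\hashtag_{word(j)}$; using $PSA(S)[word(i)] = i+c$, $PSA(S)[word(j)] = j+c$, and the fact that $PSA(S)$ orders the suffixes it contains, the hypothesis again yields $word(i) < word(j)$ and therefore the correct order of end markers.

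The main obstacle will be the third case: I need to argue carefully that membership in $PSA(S)$ is a downward-closed property of the suffix order on $S$, so that a $PSA$-membership asymmetry alone already pins down the relative size of the corresponding $S$-suffixes. The rest is essentially bookkeeping of the $word$ and $position$ functions across the positions $0, 1, \ldots, c$.
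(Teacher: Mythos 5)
Your proposal is correct and follows essentially the same route as the paper's proof: locate the first offset $c$ at which the two $W'$-suffixes can diverge, use Propositions~\ref{lemma:first_letter} and~\ref{lemma:not_first_letter} to show character-by-character agreement before $c$, and then case-split at $c$ on which side hits a word boundary, using the fact that $PSA(S)$ consists of the positions of the lexicographically smallest suffixes both to order the $\hashtag$ markers and to exclude the bad configuration. The only (harmless) difference is organizational: you make the definition of $c$ symmetric in $i$ and $j$ and then refute the case where only the $j$-side boundary triggers, whereas the paper defines $c$ asymmetrically via the $i$-side and rules out early $j$-side boundaries up front with the same downward-closure argument.
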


\begin{proof}
    Let $c \geq 0$ be the smallest value for which either $S[i+c] \neq S[j+c]$, or $i+c \in PSA(S)$, so $S[i, i+c-1] = S[j, j+c-1]$. The idea here is, that $c$ is the distance to the positions where the tie between the suffixes starting at $i$ and $j$ breaks. 
    First, from $S[i, i+c-1] = S[j, j+c-1]$ and $S[i..] < S[j..]$, we conclude that $S[i + b..] < S[j + b..]$ for all $b$ with $0 \leq b < c$. As $PSA(S)$ contains the positions of the smallest suffixes of $S$, we conclude from $i, i+1, \dots, i + c - 1 \not\in PSA(S)$ and $S[i + b..] < S[j + b..]$ for all $b$ with $0 \leq b < c$ that $j, j+1, \dots, j + c - 1 \not\in PSA(S)$. 

    Second, by Proposition~\ref{lemma:not_first_letter}, we now get 
    \begin{align*}
        &word(i) = word(i+1) = \cdots = word(i + c - 1),\\
        &word(j) = word(j+1) = \cdots = word(j + c - 1),\\
        &position(i+c-1) = position(i+c-2) + 1 = \cdots = position(i)+ c-1,\\
        &position(j+c-1) = position(j+c-2) + 1 = \cdots = position(j) + c-1.
    \end{align*} 
    This shows, that the tie of $position(i)$ in $W'_{word(i)}$ in comparison to $position(j)$ in word $W'_{word(j)}$ is not decided before the distance $c$. In other words, if the order of the suffixes is $W'_{word(i+c-1)+1}[position(i+c-1)+1..] < W'_{word(j+c-1)+1}[position(j+c-1)+1..]$, we get $W'_{word(i)}[position(i)..] < W'_{word(j)}[position(j)..]$.

    Now, we distinguish three cases.

    First case, if $S[i+c] \neq S[j+c]$ and $i+c \not\in PSA(S)$: Like above, $j+c \not\in PSA(S)$, so we get \[W'_{word(i+c)}[position(i+c)] = S[i+c] < S[j+c] = W'_{word(j+c)}[position(j+c)].\] This is the case, when the comparison of two suffixes in $W$ can be decided without getting to the end of a word in $W$.

    Second case, if $i+c = PSA(S)[v]$ and $j+c \not\in PSA(S)$: Then \[W'_{word(i+c-1)}[position(i+c-1)+1] = \hashtag_{v} < S[j+c] = W'_{word(j+c)}[position(j+c)].\] Note that $position(i+c-1)+1 \geq 1$ and $position(i+c) = 0$ by Proposition~\ref{lemma:first_letter} because $word(i+c-1) \neq word(i+c)$. 

    Third case, if $i+c = PSA(S)[v]$ and $j+c = PSA(S)[w]$: From $S[i..] < S[j..]$, we get $v < w$ by the definition of the suffix array. Then 
    \begin{align*}
        W'_{word(i+c-1)}[position(i+c-1)+1]
        &=\hashtag_{v} \\
        &<\hashtag_{w} \\
        &= W'_{word(j+c-1)}[position(j+c-1)+1] \\
    \end{align*}
\end{proof}

\begin{theorem}\label{theorem:suffix_array}
    Let $S$ be a string of length $n$ over the alphabet $\Sigma$. Let $W$ be the collection of partitioned words $W_i' = S[\Omega(i), PSA(S)[i]-1] + \hashtag_i$ obtained from $S$. Let $l (= k + 1)$ be the size of $W$, $m (= n + l)$ be the total length of $W$, and let $SA(S)$ and $SA(W)$ be the suffix arrays of $S$ and $W$, respectively. Then, the suffix array and document array are (for all $i < m$):

    \[ SA(W)[i] = \begin{cases}
        |W_i| & 0 \leq i < l \\
        position(SA(S)[i-l]) & l \leq i < m
    \end{cases} \]

    \[ DA(W)[i] = \begin{cases}
        i & 0 \leq i < l \\
        word(SA(S)[i-l]) & l \leq i < m
    \end{cases} \]
\end{theorem}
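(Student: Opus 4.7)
The plan is to split $SA(W)$ into its first $l$ entries and the remaining $m-l$ entries, identifying them as the pure-$\hashtag$ suffixes of $W'$ and the non-$\hashtag$ suffixes of $W'$, respectively; the latter set will be put into an order-preserving bijection with the suffixes of $S$.

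For the initial block $0 \leq i < l$, I would first note that every $W'_d$ has at least one non-$\hashtag$ character, since $|W_d| = PSA(S)[d] - \Omega(d) \geq 1$ by the definition of $\Omega$. Because $\hashtag_0 < \hashtag_1 < \cdots < \hashtag_k < \dollar < a$ for every $a \in \Sigma$, the $l$ smallest suffixes of $W'$ are precisely the pure-$\hashtag$ suffixes $\hashtag_0, \ldots, \hashtag_k$, in increasing subscript order. Since $\hashtag_d$ sits at position $|W_d|$ of $W'_d$, this immediately yields $DA(W)[i] = i$ and $SA(W)[i] = |W_i|$ for $0 \leq i < l$.

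For the tail block $l \leq i < m$, I would introduce the map $\phi \colon q \mapsto (word(q), position(q))$ from the suffix positions of $S$ onto the non-$\hashtag$ starting positions $(d,p)$ with $p < |W_d|$ in $W'$, using the alias $n \equiv -1$ to account for the $\dollar$ suffix $S[n..]$. Bijectivity follows because the intervals $[\Omega(d), PSA(S)[d]-1]$ partition $\{-1, 0, \ldots, n-1\}$ and their total length coincides with the number of non-$\hashtag$ positions in $W'$. Proposition~\ref{lemma:ordering} then says that $\phi$ is order-preserving: whenever $S[i..] < S[j..]$, the images satisfy $W'_{word(i)}[position(i)..] < W'_{word(j)}[position(j)..]$. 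Consequently, the $(i-l)$-th smallest non-$\hashtag$ suffix of $W'$ is $\phi(S[SA(S)[i-l]..])$, which delivers the stated formulas $SA(W)[i] = position(SA(S)[i-l])$ and $DA(W)[i] = word(SA(S)[i-l])$.

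The main obstacle will be the bookkeeping around the $\dollar$ suffix, where position $n$ of $S$ has to be identified with the phantom position $-1$ of the rotation through $word(n) = word(-1)$ and $position(n) = 0$, and verifying that $\phi$ really is a bijection between the suffix starting positions of $S$ and the non-$\hashtag$ starting positions of $W'$. Once that identification is made and the interval bookkeeping is checked, the order-preserving property of $\phi$ transports $SA(S)$ directly into the tail of $SA(W)$ and $DA(W)$, so the theorem follows by matching the two sorted sequences entry by entry with no further calculation.
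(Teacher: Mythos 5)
Your proposal is correct and follows essentially the same route as the paper's proof: the first $l$ entries are identified as the pure-$\hashtag$ suffixes ordered by subscript, and the remaining $n$ entries are obtained by transporting the order of $SA(S)$ through Proposition~\ref{lemma:ordering} via the $word$/$position$ correspondence. Your explicit check that this correspondence is a bijection (including the $n \equiv -1$ identification for the $\dollar$ suffix) is a point the paper leaves implicit, but it does not change the argument.
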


\begin{proof}
    By construction of the words $W'_i$, the smallest $k + 1 = l$ characters in $W$ are $\hashtag_i$ and each $\hashtag_I$ occurs only once. Thereby, we get $SA(W)[i] = |W'_i| - 1 = |W_i|$, which is the position of the $\hashtag_i$ in $W'_i$, together with $DA(W)[i] = i$ for $0 \leq i < m$ by the order of the $\hashtag_i$ symbols.
    
    Next, there is a continuous block of length $n$ left in the $SA(W)$ and $DA(W)$ arrays to prove. 
    By definition of the suffix array, we get for the string $S$ \[S[SA(S)[0]..] < \cdots < S[SA(S)[n-1]..].\] 
    By Proposition~\ref{lemma:ordering}, we get the following order of the remaining $n$ suffixes of $W$: 
    \begin{align*}
        &W'_{word(SA(S)[0])}[position(SA(S)[0])..] \\
        < &W'_{word(SA(S)[1])}[position(SA(S)[1])..] \\
        < &\cdots \\
        < &W'_{word(SA(S)[n-1])}[position(SA(S)[n-1])..].
    \end{align*}

    The inequations show the order of the remaining $n$ suffixes. For example, we get that $W'_{word(SA(S)[0])}[position(SA(S)[0])..]$ is the $l$-th lowest suffix of $W$, so $SA(W)[l] = position(SA(S)[l-l])$ and $DA(W)[l] = word(SA(S)[l-l])$. The additional $-l$ within the terms $SA(S)[i-l]$ come from the fact that this order starts at position $l$ in $SA(W)$ instead of at position $0$.
\end{proof}

\setcounter{theorem}{0}
\begin{theorem}
    Let $S$ be a string of length $n$ over the alphabet $\Sigma$. Let $W$ be the collection of partitioned words $W_i' = S[\Omega(i), PSA(S)[i]-1] + \hashtag_i$ obtained from $S$. Let $l (= k + 1)$ be the size of $W$, $m (= n + l)$ be the total length of $BWT(W)$, and let $BWT(S)$ and $BWT(W)$ be the BWTs of $S$ and $W$, respectively. Then, for all $i < m$:

    \[ BWT(W)[i] = \begin{cases}
        BWT(S)[i] & 0 \leq i < l \\
        \hashtag & l \leq i < 2l \\
        BWT(S)[i - l] & 2l \leq i < m
    \end{cases} \] 
\end{theorem}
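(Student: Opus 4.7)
The plan is to build on Theorem~\ref{theorem:suffix_array}, established just above, which expresses $SA(W)$ and $DA(W)$ entirely in terms of $SA(S)$ and the functions $word$ and $position$. Once those two arrays are known, every entry of $BWT(W)$ is read off via the multi-string formula
\[BWT(W)[i] \;=\; W'_{DA(W)[i]}\bigl[SA(W)[i]-1\bigr],\]
with the cyclic convention $W'_j[-1] := \hashtag_j$ whenever $SA(W)[i] = 0$. Under this setup, the partition theorem reduces to three case computations corresponding to the three index ranges appearing in its statement.

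For $0 \le i < l$, Theorem~\ref{theorem:suffix_array} gives $SA(W)[i] = |W_i|$ and $DA(W)[i] = i$, so $BWT(W)[i]$ is the character of $W'_i$ at position $|W_i|-1$, i.e.\ the last character of $W_i$ itself (not the end marker). Since $W_i = S[\Omega(i), PSA(S)[i]-1]$, that character is $S[PSA(S)[i]-1]$, and because $PSA(S)[i] = SA(S)[i]$ throughout this range, it equals $BWT(S)[i]$. For $l \le i < 2l$, the same theorem gives $SA(W)[i] = position(SA(S)[i-l])$; since $0 \le i-l < l$, the argument $SA(S)[i-l]$ lies in $PSA(S)$, so Proposition~\ref{lemma:first_letter} forces this value to $0$. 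The cyclic convention then yields $BWT(W)[i] = \hashtag$.

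For the remaining range $2l \le i < m$, set $q := SA(S)[i-l]$. Because $SA(S)$ is a permutation and $i-l \ge l$ sits strictly past the first $l$ entries of $SA(S)$, $q$ is not among the values $SA(S)[0], \ldots, SA(S)[k]$ that form $PSA(S)$. Proposition~\ref{lemma:first_letter} therefore implies $position(q) \ge 1$, and Proposition~\ref{lemma:not_first_letter} gives $word(q-1) = word(q)$ and $position(q-1) = position(q)-1$. Consequently
\[
W'_{word(q)}\bigl[position(q)-1\bigr] \;=\; W_{word(q-1)}\bigl[position(q-1)\bigr] \;=\; S[q-1],
\]
so $BWT(W)[i] = S[SA(S)[i-l]-1] = BWT(S)[i-l]$, as required.

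The main obstacle will be bookkeeping around the cyclic convention for $SA(W)[i]=0$ and around edge positions such as $q = -1$ or $|W_i| = 0$. These are ruled out by the facts that $PSA(S)[0] = n$ (since $\dollar$ is the unique minimal symbol), that $\Omega(i) < PSA(S)[i]$ always holds (so $|W_i| \ge 1$ for every $i$), and that entries of $SA(S)$ are always non-negative, so $q = -1$ cannot arise in the third range.
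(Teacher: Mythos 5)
Your proposal is correct and follows essentially the same route as the paper: both invoke Theorem~\ref{theorem:suffix_array} to rewrite $SA(W)$ and $DA(W)$, then evaluate $BWT(W)[i]=W'_{DA(W)[i]}[SA(W)[i]-1]$ in the same three index ranges, using Proposition~\ref{lemma:first_letter} for the $\hashtag$-block and Proposition~\ref{lemma:not_first_letter} for the tail. Your explicit treatment of the cyclic convention $W'_j[-1]=\hashtag_j$ and of the edge cases ($|W_i|\ge 1$, $q\ne -1$) is slightly more careful than the paper's, but the argument is the same.
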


\begin{proof}

    We calculate $BWT(W)$ from $SA(W)$. For any $i < m$:

    \[ BWT(W)[i] = W'_{DA(W)[i]}[SA(W)[i] - 1].\]

    In the case that $i < l$, we get 

    \[ W'_{DA(W)[i]}[SA(W)[i] - 1] = W'_{i}[|W_i| - 1]\]

    and with the definition of $W'_i$, we get 

    \[ W'_{i}[|W_i| - 1] = S[PSA(i) - 1] = S[SA(S)[i] -1] = BWT(S)[i].\]

    In the case $i \geq l$, we get

    \[W'_{DA(W)[i]}[SA(W)[i] - 1] = W'_{word(SA(S)[i-l])}[position(SA(S)[i-l])-1].\]

    Next, if $i < 2l$, we have $i - l < l$, so $SA(S)[i-l] \in PSA(S)$. We can use Proposition~\ref{lemma:first_letter} now: $position(SA(S)[i-l]) = 0$, hence 

    \[ W'_{DA(W)[i]}[SA(W)[i] - 1] = W'_{word(SA(S)[i-l])}[0-1] = \hashtag_{word(SA(S)[i-l])} \]

    Last, if $i \geq 2l$, so $SA(S)[i-l] \not\in PSA(S)$. There is exactly one $j < l$, such that $SA(S)[i-l] \in [\Omega(j), PSA(j)-1]$. Then, $position(SA(S)[i-l]) = \Omega(j) - SA(S)[i-l]$ and $position(SA(S)[i-l]-1) = \Omega(j) - SA(S)[i-l] -1$ due to Proposition~\ref{lemma:not_first_letter}.

    \begin{align*}
        BWT(W)[i] &= W'_{word(SA(S)[i-l])}[position(SA(S)[i-l])-1] \\
        &= S[\Omega(j), PSA(j)-1][SA(S)[i-l]-1 - \Omega(j)]\\
        &= S[\Omega(j) + SA(S)[i-l]-1 - \Omega(j)] \\
        &= S[SA(S)[i-l]-1] \\
        &= BWT(S)[i-l].\\
    \end{align*}
\end{proof}

In the proofs of Theorem~\ref{theorem:partition} and \ref{theorem:suffix_array}, we have only shown the correctness of the partition using a single word $S$, but the proofs did not use the limitation that only one string $S$ was given. The presented partitioning can also transform a collection of strings $S = \{S_0, \dots, S_n\}$ into a larger collection of shorter words. The necessary changes for partitioning a collection of strings is to use a suffix array and a document array of $S$ instead of using only the suffix array of $S$ and they include a set of $\dollar$ symbols to terminate the strings and a set of $\hashtag$ symbols for partitioning the strings into words. Hereby, each $\dollar$ symbol is lexicographically larger than each $\hashtag$ symbol. There is no change necessary in the proof steps. Note that our partDNA implementation can partition a collection $S$ of strings as input.

If data is highly repetitive, similar words are next to each other in the collection $W$ after partitioning $S$. The argument is the same as for the BWT grouping similar characters together: If we divide a pattern occurring frequently in $S$, each word before such a dividing position has a common prefix of its suffix because the dividing position occurs often as well. In the example of $S$, we have twice $AAAACAAACCG$. The words $AAAAC$ are grouped together by their suffixes starting with $AAACCG$.

\section{On the Size of the Reduced Problem compared to SA-IS}

\begin{theorem}
    Each position $p > 0$ with either $S[p] = A = S[p + 1] = \cdots = S[p + d - 1]$ and $S[p + d] \not\in \{\$, A\}$ and $S[p - 1] \neq A$, or $S[p] = \$$ is a left-most S-type (LMS) position (according to the definition in SA-IS~\cite{nong2011}).
\end{theorem}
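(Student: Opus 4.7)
The plan is to unfold the SA-IS definition of LMS and check the two defining conditions (S-type at $p$, L-type at $p-1$) directly in each of the two listed cases. Recall that $p$ is LMS iff $S[p..] < S[p+1..]$ (S-type) and $S[p-1..] > S[p..]$ (L-type). I will rely on the DNA alphabet ordering $\$ < A < C < G < T$ and on the fact that $\$$ occurs only at position $n$, so that for any $0 < p \leq n$ we have $S[p-1] \neq \$$.

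For the case $S[p] = \$$, the suffix $S[p..] = \$$ is the lexicographically smallest suffix, so S-type follows for free; then since $S[p-1]$ is a letter in $\{A,C,G,T\}$ strictly greater than $\$$, the suffix $S[p-1..]$ begins with a symbol exceeding $S[p]$, giving L-type at $p-1$. This case is essentially a one-line observation.

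For the main case, suppose $S[p..p+d-1] = A^d$, $S[p+d] \notin \{\$, A\}$, and $S[p-1] \neq A$. For S-type at $p$, I would write $S[p..] = A^d\cdot S[p+d..]$ and $S[p+1..] = A^{d-1}\cdot S[p+d..]$; the first $d-1$ symbols agree, and at position $d-1$ the first suffix carries $A$ while the second carries $S[p+d] \in \{C,G,T\}$, which is strictly greater than $A$. Hence $S[p..] < S[p+1..]$. For L-type at $p-1$, since $p > 0$ and $p-1 \neq n$, $S[p-1]$ is a regular letter; combined with $S[p-1] \neq A$, this forces $S[p-1] \in \{C,G,T\}$, so $S[p-1] > A = S[p]$, giving $S[p-1..] > S[p..]$ already on the first symbol.

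The main obstacle is not really a mathematical difficulty but a bookkeeping one: I must be careful about the boundary cases (when $p = n$, when $p - 1 = -1$ under the wrap-around convention of Section~\ref{section:preliminaries}, and when $p+d$ steps beyond the end of $S$) and about the implicit assumption of the DNA alphabet. If the statement is meant to hold for arbitrary alphabets, the only place the argument uses DNA is the claim that ``not $\$$, not $A$'' implies strictly greater than $A$; this generalizes verbatim as long as $A$ denotes the smallest non-terminator symbol of the ordered alphabet, which is exactly the role $A$ plays in partDNA. Beyond that, both subcases reduce to direct one-character comparisons, so no deeper argument is needed.
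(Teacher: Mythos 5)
Your proof is correct and follows essentially the same route as the paper's: a two-case analysis establishing S-type at $p$ and L-type at $p-1$ via single-character comparisons over the DNA alphabet. The only cosmetic difference is that for S-type at $p$ in the $A$-run case you compare the suffixes $S[p..]$ and $S[p+1..]$ directly, whereas the paper shows $p+d-1$ is S-type and propagates the type back through the run of equal characters; these are interchangeable readings of the same SA-IS definition.
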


\begin{proof}
    If $S[p] = \$$: If $p = 0$ then $S = \$$. If $p > 0$, then $S[p-1] \neq \$$ and thereby, $S[p-1] > S[p]$. Then $p-1$ is L-Type and $p$ is S-Type by definition, which means that $p$ is a LMS position.

    Next, $S[p] = A$. $\$ \neq S[p - 1] \neq A$, so $S[p - 1] \in \{C, G, T\}$. We get $S[p - 1] > S[p]$, so $p-1$ is L-Type. Because $S[p + d] \not\in \{\$, A\}$, $p+d-1$ is S-Type. Finally, $S[p] = A = S[p + 1] = \cdots = S[p + d - 1]$ implies that the type of $p$ is equal to $p + 1$ is equal to $\dots$ is equal to $p+d-1$, so $p$ is S-Type and LMS-position.
\end{proof}

We conclude that our reduced problem is smaller or equal to the recursive problem of SA-IS because our reduced problem contains only one character for each position $p$ with $d > \minlength$.

\newpage
\section{List of Symbols} \label{appendix:abbreviations}

\begin{longtable}{lp{10cm}}
    Symbol & Explanation \\\hline
\endhead
    $i, j, q$ & Indices with minimal scope. \\
    $S$ & A single string of length $n$. \\
    $\Sigma$ & The alphabet for strings. \\
    $S[i..]$ & Suffix of $S$ starting at position $i$. \\
    $W$ & Collection of strings $W_i$.\\
    $SA(S)$ & Suffix array of string $S$. \\
    $SA(W)$ and $DA(W)$ & Suffix array and document array of the collection $W$. They need to be used in combination. \\
    $BWT(S)$ or $BWT(W)$ & Burrows-Wheeler transform of $S$ or $W$. \\
    $k$ & $k+1$ is a fixed value, the number of words in $W$, and the length of $PSA(S)$. \\
    $PSA(S)$ & Prefix of the suffix array $SA(S)$ of $S$ of length $k+1$. \\
    $\Omega(j)$ & Next smaller value in $PSA(S)$ or $-1$. Used to define the $W_i$. \\
    $word$ and $position$ & Functions to map a character $S[q]$ at position $q$ to its occurrence in the partition, $W_{word(q)}[position(q)]$. \\
    $l$, $m$ & $l=k+1$ is the size of set $W$; $m = n+l$ is the length of string $BWT(W)$.\\
    $\minlength$ & Minimal length of $A$ run needed to divide $S$ at that position.\\
    $ID$ & Array of word IDs.\\
    $B$ & Array containing the lengths of $A$ runs before a word.\\
    $Z$ & Array containing the lengths of $A$ runs after a word in Step 2 and Step 3. Contains runs for equal words in Step 4 and thereafter the lexicographical names for the words used in the reduced problem $R$. \\
    $R$ & String of integers, the reduced problem in partDNA. \\
    $L$ & Inverted and incremented list of the suffix array of $R$ without the suffix array position for the added end marker.\\
    $Q$ & List for induced-sorting variant of the skipped word with ID $0$.\\
\end{longtable}

\end{document}